\DeclareMathAlphabet\rsfscr{U}{rsfso}{m}{n}
\def \Z 	{\mathbb{Z}}
\def \B		{\mathbb{B}}
\def \R 	{\mathbb{R}}
\def \F 	{\mathbb{F}}
\def \A 	{\mathbf{A}}
\def \L 	{\mathcal{L}}
\def \CA 	{\mathsf{CA}}
\def \LCCA 	{\mathsf{LCHCA}_{n,p}^M}
\def \neighbor  {\rsfscr{N}}
\def \D 	{\rsfscr{D}}
\def \d 	{\rsfscr{D}'}
\def \SIS	{\mathsf{SIS}_{n,m}^{q,\beta}}
\def \SRPS	{\mathsf{SLS}_{n,m}^{q,\beta}}
\def \DLP	{\mathsf{DLP}_{n,p}}
\def \SDP	{\mathsf{SDDP}_{k,n}^{\delta}}
\def \FDP	{\mathsf{FDP}_{k,n}}
\def \DDP	{\mathsf{DDP}_{n,p}^M}
\begin{document}
\title{Generating Hard Problems of Cellular Automata}

\titlerunning{Hard Problems of Cellular Automata}

%
\author{Souvik Sur}


\institute{Department of Computer Science and Engineering,\\ 
Indian Institute of Technology Kharagpur,\\ Kharagpur, West Bengal, India \\
\email{souviksur@iitkgp.ac.in}}

\maketitle
\begin{abstract}
We propose two hard problems in cellular automata. In particular the problems are,
\begin{itemize}
 \item \noindent  [$\DDP$] Given two \emph{randomly} chosen configurations $t$ and $s$ 
of a cellular automata of length $n$, find the number of transitions $\tau$ between $s$ and $t$.

\item \noindent [$\SDP$] Given two \emph{randomly} chosen configurations $s$ 
of a cellular automata of length $n$ and $x$ of length $k<n$, 
find the configuration $t$ such that $k$ number of cells of $t$ is fixed to $x$ and 
$t$ is reachable from $s$ within $\delta$ transitions.
\end{itemize}

We show that the discrete logarithm problem over the finite field reduces to $\DDP$ and 
the short integer solution problem over lattices reduces to $\SDP$ . The advantage of using 
such problems as the hardness assumptions in cryptographic protocols is that proving the security 
of the protocols requires only the reduction from these problems to the designed protocols. We 
design one such protocol namely a proof-of-work out of $\SDP$.

\end{abstract}

\keywords{
Cellular Automata \and
Discrete Logarithm Problem \and
Short Integer Solution Problem \and
Finite Field \and
Lattices
}

\section{Introduction}\label{introduction}
Cellular automata $\CA$ is a universal model of computation~\cite{Worsch1993Parallel} like Turing machines.
It has been used in numerous applications ranging from cryptography to coding theory, 
from VLSI design to memory testing.
In most of the applications, it succeeds 
to achieve some predefined set of desired properties like pseudo-randomness, efficient parallelizability etc.
Unfortunately, in the context of cryptography nowadays $\CA$s are often referred to ``older crypto", due to 
the absence of theorems to prove certain security properties required for any cryptographic schemes. 
The best that $\CA$s offer for cryptographic design are some conjectures like given a configuration of a $\CA$ it 
is infeasible to find one of its predecessors.

In this paper, we show that (at least) a type of $\CA$, namely linear cyclic hybrid cellular automata 
$\LCCA$ essentially simulates the computation over the finite field $\F_{p^n}$. 
Linear hybrid cellular automata are finite state machines that mimics a linear transformation 
over the vector space defined over a finite field.
The mapping between $\LCCA$ and finite field $\F_{p^n}$ immediately gives us an edge to convert those 
longstanding conjectures into theorems. 

More importantly as a consequence of this mapping, we 
pose two new problems, namely Discrete Distance Problem $\DDP$ and Short Discrete Distance Problem $\SDP$,
over $\LCCA$. We give two polynomial time reductions,
\begin{itemize}
 \item the Discrete Logarithm Problem $\DLP$ over the finite field $\F_{p^n}$ to $\DDP$
 \item the Short Integer Solution Problem $\SIS$ over the lattices to $\SDP$.
\end{itemize}
This reductions shows that cryptographic protocols based on $\DDP$ and $\SDP$ are secure 
as long as $\DLP$ and $\SIS$ computationally hard.
Moreover $\SIS$ known to be an average-case 
hard problem~\cite{Ajtai1996SIS}. As $\SIS$ reduces to $\SDP$ for randomly chosen instance of $\SDP$,
it turns out to be hard in average-case too.
As a typical application we design a proof-of-work scheme out of the problem $\SDP$ at the end.
 
\section{Preliminaries}\label{preliminaries}
In this section we take a brief review of the tools and techniques that will be required 
for the rest of the paper.
\subsection{Rings and Finite Fields}
We take all the definitions related to abstract algebra from this book~\cite{Gallian2016Algebra}.
A ring is an Abelian group under addition, also having an associative multiplication 
that is left and right distributive over addition. The rings that allow the multiplication 
to be commutative is called commutative rings. Suppose $R$ is a commutative ring. A ring of polynomials 
$R[x]=\{\sum\limits_{i=0}^{n-1} a_ix^i\}$ where $a_i\in R$ and $n\in \mathbb{Z}^+$, is the set of 
polynomials whose coefficients are from the ring $R$.
A field $\mathbb{F}$ is a commutative ring with unity in which every nonzero element is a unit. 
As $\mathbb{F}$ is a commutative ring, the set $\mathbb{F}[x]$ is also a ring of polynomials.
If $\mathbb{F}_q$ has a finite number $q$ of elements we call $\mathbb{F}_q$ as a finite field 
of order $q$. It can be shown that $\mathbb{F}_q$ to be a finite field if and only if 
$q=p^n$ where $p$ and $n \in\mathbb{Z}^+$. Every fields of the same order are isomorphic. 
The nonzero elements of a finite field $\mathbb{F}_q$ form a cyclic multiplicative group $\mathbb{F}_q^*$. 
Suppose $\alpha$ is one of the generators $\mathbb{F}_q^*$ then $\alpha$ is called a primitive element 
of the field $\mathbb{F}_q$. We denote $\mathbb{F}_q\langle\alpha\rangle=\{0,1,\alpha,\alpha^2\ldots\alpha^{q-2}\}$ 
to be the field generated by the primitive element $\alpha$.

We call a subset $\mathbb{F}_d\subset\mathbb{F}_q$ is a subfield of $\mathbb{F}_q$ if $\mathbb{F}_d$ preserves 
all the operations of $\mathbb{F}_q$. It can be shown that $d$ always divides $n$. 
We call $\mathbb{F}_p$ as the base field and $\mathbb{F}_d$ and $\mathbb{F}_q$ as the extended 
fields. We denote these field extensions as $\mathbb{F}_p/\mathbb{F}_d$ and $\mathbb{F}_d/\mathbb{F}_q$.

An ideal is a subset $I$ of elements in a ring $R$ that forms an additive group such that, 
for $x\in R$ and $y\in I$, then $xy\in I$ and $yx \in I$. We call $R/I$ as a quotient ring 
if $R/I$ is the set of cosets of $I$ in $R$ with respect to addition and multiplication.
An ideal $I$ of a ring $R$ is maximal 
if and only if there is no other ideal in between $I$ and $R$. An ideal $\langle a \rangle=\{ra|r\in R\}$
is called a principle ideal. The quotient ring $R/I$ is a field if and only if $I$ is a maximal ideal.

Suppose $f(x)\in\mathbb{F}_p[x]$ is a polynomial over the field $\mathbb{F}_p$. We call $f(x)$ to be 
an irreducible polynomial if and only if $f(x)$ can not be factored into non-constant polynomials
over the field $\mathbb{F}_p$. As $\F_p[x]$ is a ring $\langle f(x)\rangle$ is an ideal in 
$\mathbb{F}_p[x]$. The ideal $\langle f(x) \rangle$ is maximal if and only if $f(x)$ is irreducible 
over $\mathbb{F}_p$. Therefore, the quotient ring $\mathbb{F}_p[x]/\langle f(x) \rangle$ 
is a field if and only if $f(x)$ is irreducible.

\subsection{Discrete Logarithm Problem}

\begin{definition}{\bf (Discrete Logarithm Problem \textsf{DLP}).}
Given a finite cyclic multiplicatively written group $\mathbb{G}=\langle g \rangle$,
a generator $g$ of $\mathbb{G}$, and an element $h\in\mathbb{G}$,
the $\DLP$ is to find an integer $x$, unique modulo the order of
$\mathbb{G}$, such that $h=g^x$. 
\end{definition}
For certain groups $\mathbb{G}$, the $\DLP$ turns out to be a computationally intensive problem.
With the relevance to our current context we mention such a group. 
\noindent
\begin{definition} {\normalfont  \textbf {(DLP over Finite Fields $\DLP$).}}\quad
Let $g$ be a generator of $\mathbb{F}_{p^n}^*$, and $a\in\mathbb{F}_{p^n}^*$
for some prime $p$. There is an integer $x$, unique modulo~$p^n-1$, such
that $a=g^x\in\mathbb{F}_{p^n}$. The $\DLP$ is the problem of determining $x$ from the pair $(g,a)$.
\end{definition}

\subsection{Lattices}
An $n$-dimensional lattice is the set of all integer combinations 
$\{\sum \limits_{i=1}^{n} x_i \beta_i \mid x_i \in \Z\}$ of $n$ linearly independent 
vectors $\{\beta_1, \ldots, \beta_n\}$ in $\R^n$. The set of vectors $\{\beta_1, \ldots , \beta_n\}$ is called a basis 
for the lattice, and can be compactly represented by the matrix 
$\B = [\beta_1 |, \ldots, | \beta_n] \in \R^{n\times n}$ having the basis vectors as columns. 
The lattice generated by $\B$ is denoted by $\L(\B)$. 

The minimum distance of a lattice $\L(\B)$ is the minimum distance between any two (distinct) lattice
points and equals the length of the shortest nonzero lattice vector. The minimum distance can be defined
with respect to any norm. For any $p \ge 1$, the $p$-norm of a vector $\vec{x}$ is defined by 
$\|\vec{x}\|_p=\sqrt[p]{\sum_i |z_i|^p}$ corresponding minimum distance is denoted,

$$\lambda^p_1(\L(\B))=min\{\|\vec{x}-\vec{y}\|_p \mid \vec{x}\ne \vec{y} \in \L(\B)\}=
min\{\|\vec{x}\|_p \mid \vec{x} \in \L(\B)\setminus\vec{0}\}.$$

Without loss of generality the $\ell_2$-norm is used in rest of the paper.
\begin{definition}{ \normalfont \textbf{ ($\ell_2$-norm).}} 
The $\ell_2$-norm $\|(z_1, z_2, \ldots, z_n)\|=\sqrt{z_1^2 + z_2^2 + \ldots + z_n^2}$.
\end{definition}
\begin{definition}{ \normalfont \textbf{ ($\gamma$-approximate Shortest Vector Problem $\mathsf{SVP}^\gamma$).}} 
Given a lattice $\L(\B)$, find a nonzero vector $\vec{x} \in \L(\B)$ such that 
$\|\vec{x}\| \le \gamma \lambda_1(\L(\B)).$
\end{definition}

$\mathsf{SVP}^{\gamma}$ is known to be hard for $\gamma=\mathrm{poly}(n)$.

\begin{definition}{\normalfont \textbf{((Inhomogeneous) Short Integer Solution Problem $\SIS$).}}
Given $m$ uniformly random vectors $\vec{a}_i\in_R \Z^n_q$, 
forming the columns of a matrix $\mathbf{A} \in \Z^{n\times m}_q$, 
and a uniformly random vector $\vec{v}\in_R  \Z^n_q$,
find a nonzero integer vector $\vec{z} \in \Z_q^m$ such that,
\begin{enumerate}
 \item the $\ell_2$-norm $\|\vec{z}\| \leq \beta$ ,
 \item $\mathbf{A}\vec{z}=\sum \vec{a}_i \cdot z_i=\vec{v}\in\Z^n_q$.
\end{enumerate}
\end{definition}

The above-mentioned definition of $\SIS$ known to be the inhomogeneous version of $\SIS$.
Historically, $\SIS$ made its debut with its homogeneous version fixing $\vec{v}=\vec{0}\in\Z^n_q$~\cite{Ajtai1996SIS}.
It is shown that $\mathsf{SVP}^{\mathrm{poly}(n)}$ in the worst case reduces to $\SIS$ even if $\A$ 
is chosen uniformly at random i.e., in the average case.
Both these versions are equally hard for typical parameters
for $m \geq \lceil n\log q \rceil$ and $q \gg \beta$, however $\beta > \sqrt{n\log q}$ is required to guarantee that 
there exists at least one solution.

\subsection{Cellular Automata}\label{sec: CA}

Cellular automata is a universal model of computation like Turing machine.

\begin{definition}{\normalfont \textbf{(Cellular Automata).}}\label{CA}
A cellular automata $\CA$, having $n$ cells and $\neighbor$-neighborhoodness,
is a tuple $\mathsf{CA} = \langle Q,n, s, F, \neighbor , \delta\rangle$ with the following meaning,
\begin{enumerate}
 \item $Q$ is the finite and nonempty set of states and also tape alphabet.
 \item $n$ is the length of the $\CA$ i.e., the number of cells. 
 \item $s\in Q^n$ is the initial configuration\footnote{Configuration is often referred as state of $\CA$. We 
 prefer to use state as the state of a control unit not of the entire $\CA$.}.
 \item $F$ is the set of halting states.
 \item Given the index $i$ of a cell, the neighborhood $\neighbor$ is the set of relative offsets 
 from the $i$-th cell, from which the local transition function assume inputs i.e, $f_i:Q^\neighbor\rightarrow Q$. 
 \item $\delta : Q^n \rightarrow Q^n$ is the global transition function which is an ensemble of 
 $n$ number of local transition functions $\{f_i : Q^\neighbor \rightarrow Q\}$.
 \end{enumerate}
\end{definition}

%
 Given the index $i$ of a cell we denote its neighborhood set 
 $\neighbor_i\subseteq \{-i,-i+1,\ldots,-1,0,+1,\ldots, n-i-2, n-i-1\}$ 
 where $0$ corresponds to the $i$-th cell itself, starting from the leftmost cell. When the 
 neighborhoods $\neighbor_i$ are regular for all the cells, we drop the subscript $i$
 and the cardinality of the set $\neighbor$ is called the neighborhoodness of the $\CA$. 
 For example, the three-neighborhoodness $\neighbor = \{ -1,0, +1\}$ denotes the neighborhood of 
 $\{ i-1,i, i+1\}$ for any $1<i<n$. 

Any configuration $t=\{f_i(\neighbor_i)\}\in Q^n$ is an ensemble of the images 
of $f_i$ on the states of $\neighbor_i$ . 
A unit of computations is considered to be a single transition $t'\leftarrow \delta(t)$.
Thus a $\CA$, initialized with a global configuration $s$, makes its transitions through the 
sequence of configurations $\{s, \delta(s),\delta^2(s), \ldots,\delta^{|Q|^n-1}(s)\}$.
Given a $\CA$ there may exists set of global configurations $\Delta\subset Q^n$ such that $\delta(s)=s$ if $s\in\Delta$.
These configurations $\Delta$ are called as dead configurations.

\begin{theorem}\label{thm:independence}
If the initial configuration $s$ of a cellular automata $\mathsf{CA} = \langle Q, n, s, F, \neighbor , \delta\rangle$ 
is chosen uniformly at random i.e, $s\in_R Q^n$, then for any 
$0<\tau<|Q|^n$, 
all the cells of the subsequent configuration $\delta^\tau (s)$ remain independent of 
one another. 
\end{theorem}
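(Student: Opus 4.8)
The plan is to reduce the statement to one structural fact about the automaton --- that its global transition $\delta$ is a \emph{bijection} of the configuration space $Q^n$ --- and then to use the elementary observation that a bijection of a finite set carries the uniform distribution to itself. The point is that the hypothesis $s\in_R Q^n$ is nothing more than ``$s=(s_1,\dots,s_n)$ is uniform on the product set $Q^n$'', which is in turn the same thing as ``the cells $s_1,\dots,s_n$ are mutually independent and each uniform on $Q$''. So the whole claim amounts to showing that the law of $\delta^\tau(s)$ is still the uniform law on $Q^n$ for every $\tau$ in the stated range; equivalently, by a one-line induction on $\tau$, that each single application of $\delta$ preserves the uniform law.

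Carrying this out, I would first record that for the linear cyclic hybrid cellular automata treated here the map $\delta\colon Q^n\to Q^n$ is an invertible linear transformation of the vector space $Q^n$ (the matrix implementing $\delta$ has nonzero determinant; under the finite-field description it is multiplication by a nonzero element of $\F_{p^n}$), hence a bijection, and since a composition of bijections is a bijection, $\delta^\tau$ is a bijection for every $\tau\ge 0$. I would then note that for any bijection $T$ of a finite set $\Omega$ and any $y\in\Omega$, $\Pr[T(s)=y]=\Pr[s=T^{-1}(y)]=1/|\Omega|$, so $\delta^\tau(s)$ is uniform on $Q^n$. Finally I would unwind ``uniform on $Q^n$'' at the level of cells: for any indices $i_1<\dots<i_r$ and any values $a_1,\dots,a_r\in Q$ the cylinder $\{t : t_{i_1}=a_1,\dots,t_{i_r}=a_r\}$ has exactly $|Q|^{\,n-r}$ elements, so its probability under $\delta^\tau(s)$ is $|Q|^{\,n-r}/|Q|^{\,n}=\prod_{j}\Pr[\delta^\tau(s)_{i_j}=a_j]$, which is exactly mutual independence of the cells. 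The upper bound $\tau<|Q|^n$ plays no role in this argument; I read it merely as restricting attention to a single period of the orbit $s,\delta(s),\delta^2(s),\dots$, along which the configurations are in addition pairwise distinct.

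The one genuinely non-routine step is establishing that $\delta$ --- equivalently every $\delta^\tau$ --- is a bijection. For an unrestricted cellular automaton this fails: a local rule that overwrites two cells with a common function of the input makes those two cells equal, hence dependent, after a single transition, so the theorem really does rely on being in the linear, reversible setting of this paper, where invertibility of $\delta$ is immediate from its matrix representation. Accordingly, the cleanest presentation is to prove (or at least state) the identification of $\delta$ with a nonzero multiplication over $\F_{p^n}$ first, and then obtain this independence theorem as a short corollary, making the reversibility hypothesis explicit in the statement.
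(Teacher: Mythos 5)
Your argument is sound where it applies, but it takes a genuinely different route from the paper and, as written, proves a narrower statement than the one in the theorem. The paper's proof is an induction on $\tau$ that works directly with the local rules: for a pair of cells $i\ne j$ it splits the inputs of $f_i$ and $f_j$ into the shared part $\neighbor_i\cap\neighbor_j$ and the private parts $\neighbor_i\setminus\neighbor_j$ and $\neighbor_j\setminus\neighbor_i$, and argues from the induction hypothesis that the two outputs are independent; no linearity or reversibility is invoked, because the theorem is stated for an arbitrary $\mathsf{CA}=\langle Q,n,s,F,\neighbor,\delta\rangle$. Your proof instead hinges entirely on $\delta$ being a bijection of $Q^n$, pushes the uniform measure forward through $\delta^\tau$, and reads off mutual independence from uniformity on the product set. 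Where the bijectivity hypothesis holds --- in particular for every $\LCCA$, whose transition matrix $M$ is invertible because its characteristic polynomial is irreducible and hence does not have $0$ as a root --- your argument is airtight, strictly stronger (it gives mutual rather than merely pairwise independence, and it delivers Theorem~\ref{thm:unbiased} for free), and much shorter. The cost is that it does not establish the theorem as stated: the statement quantifies over all cellular automata with no reversibility assumption, and your own counterexample (a rule writing a common function of the inputs into two distinct cells) shows that the unrestricted claim cannot be proved by your method because it is false in that generality. So you should either restate the theorem with the bijectivity or linearity hypothesis you actually use, or supply a separate argument for non-invertible $\delta$; as it stands, your proof and the paper's proof are proofs of different assertions. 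It is worth noting that the paper's own induction is delicate on exactly this point, since it never confronts the correlation that $t'[i]$ and $t'[j]$ can acquire through the shared cells in $\neighbor_i\cap\neighbor_j$; your measure-preservation argument sidesteps that issue entirely, at the price of the extra hypothesis.
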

 
\begin{proof}
We proceed by induction on $\tau$. For $\tau=0$, the result follows from the
initialization hypothesis as $s\in_R Q^n$. For the inductive step, suppose that all the cells of the 
state $t=\delta^\tau (s)$ are independent at some time $\tau\ge0$ and $t'=\delta(t)=\delta^{\tau+1}(s)$. 
Assuming $v[i]$ denotes the $i$-th cell of the configuration $v$, it suffices to show that 
\begin{equation}\label{eq:independence}
Pr[t'[i] = a \mid  t'[j] = b ] = Pr[t'[i] = a]
\end{equation}
for all $i,j$ with $i\ne j$, and for all $a,b\in Q$.

We denote $\neighbor_i$ as the neighborhood of the $i$-th cell irrespective of the configuration. 
Trivially Eq.~\ref{eq:independence} holds when 
If 
$\neighbor_i\cap\neighbor_j=\{\varnothing\}$. 
So assume that $\neighbor_i\cap\neighbor_j\ne\{\varnothing\}$, we have 
\begin{align*}
t'[i]=f_i(\{\neighbor_i\cap\neighbor_j\}\cup\{\neighbor_i\setminus\neighbor_j\}) \\
t'[j]=f_j(\{\neighbor_i\cap\neighbor_j\}\cup\{\neighbor_j\setminus\neighbor_i\})
\end{align*}
The cell $t'[i]$ depends on $\{\neighbor_i\setminus\neighbor_j\}$ but not on $\{\neighbor_j\setminus\neighbor_i\}$. 
Similarly, the bit $t'[j]$ depends on $\{\neighbor_j\setminus\neighbor_i\}$ but not on $\{\neighbor_i\setminus\neighbor_j\}$.
Evidently, $\{\neighbor_i\setminus\neighbor_j\}\ne\{\neighbor_j\setminus\neighbor_i\}\ne\varnothing$.
By induction hypothesis, all the cells of the configuration $t$ are independent of each other, 
so $\{\neighbor_i\setminus\neighbor_j\}$ and $\{\neighbor_j\setminus\neighbor_i\}$ are independent of each other.
Therefore Eq.~(\ref{eq:independence}) holds.
\end{proof}

\begin{definition}{\normalfont \textbf{(Cyclic Cellular Automata).}}\label{CCA}
A $\mathsf{CA} = \langle Q, n, s, F, \neighbor , \delta\rangle$ with $\Delta$ dead states 
is a cyclic $\CA$ if and only if its initial configuration $s=\delta^k(s)$ for some 
$k \leq |Q|^n-1-|\Delta|$ and $s\ne\delta^i(s)$ for all $i<k$. 
\end{definition}
Note that $k$ is an invariant of the initial configuration $s$ but depends only on $\delta$.
Essentially the transition graph of these $\CA$ looks like a cycle, however, 
there may be multiple such disjoint cycles. Based upon the length of this transition cycle 
we may classify $\CA$s into these two categories.

\begin{definition}{\normalfont \textbf{(Maximum-Length Vs. Group Cellular Automata).}}
A cyclic cellular automata $\mathsf{CA} = \langle Q, n, s, F, \neighbor , \delta\rangle$
with $\Delta$ dead states is a Maximum-Length $\CA$ if and only if 
its initial configuration $s=\delta^k(s)$ for $k=|Q|^n-1-|\Delta|$ and 
$s\ne\delta^i(s)$ for all $i<k$. On the other hand, it is a group $\CA$ if its initial configuration 
$s=\delta^k(s)$ for some $k<|Q|^n-1-|\Delta|$ and 
$s\ne\delta^i(s)$ for all $i<k$. 
\end{definition}

Here, the term ``Maximum-Length" corresponds to the length of the cycle in the transition graph 
of the $\CA$. In Sect~\ref{relation}, we will see that for group $\CA$s, $k\mid(|Q|^n-1-|\Delta|)$
i.e, all the disjoint cycles have equal lengths.


Like Turing machines, $\CA$s are also required to be encoded.
The easiest way to do it is to use a bijection $Q\rightarrow \Z_{|Q|}$ so that the 
transition functions $\delta:\Z_{|Q|}^n\rightarrow \Z_{|Q|}^n$ can be defined mathematically over the 
vector space $\Z_{|Q|}^n$. In some cases, $\delta$ results into functions having algebraic closed forms,
however, may not be possible always. Depending upon the algebraic closed form (if available) of 
$\delta$, $\CA$s can be characterized as,

\begin{definition}{\normalfont \textbf{(Linear Cellular Automata).}}
A cellular automata $\mathsf{CA} = \langle \Z_{|Q|}, n, s, F, \neighbor , \delta\rangle$ is linear
if and only if its transition function $\delta$ can be represented 
by a linear operator $g:\Z_{|Q|}^n\rightarrow \Z_{|Q|}^n$. 
\end{definition}

It means for any $\vec{v}_i\in\Z_{|Q|}^n$ and any $k_i\in\Z_{|Q|}$
we have $g(\sum_{i} k_i.\vec{v}_i)=\sum_i k_i.g(\vec{v}_i)$.

\begin{description}
 \item [Transition Matrix $M$ of a linear $\CA$] 
 As $g$ is a linear operator over the vector space $\Z_{|Q|}$, there is a matrix $M\in\Z_{|Q|}^{n\times n}$ 
such that for any $\vec{v}\in \Z_{|Q|}^n$, we have $g(\vec{v})=M\times\vec{v}$. Therefore, given an initial configuration 
$\vec{s}\in \Z_{|Q|}^n$ of a linear $\CA$, we have the sequence of configurations $\{\vec{s}, M\vec{s}, M^2\vec{s},\ldots, 
M^{|Q|^n-1}\vec{s}\}$. Traditionally $M$ is called as the transition matrix of the linear $\CA$.
The $\delta$ of a linear $\CA$ can be represented with its transition matrix $M\in\Z^{n \times n}_{|Q|}$. 
 For any linear $\CA$, $\vec{s}=\delta(\vec{s})$ if and only if $\vec{s}=\vec{0}$, 
 so $\Delta=\{\vec{0}\}$.
\end{description}

\begin{theorem}\label{thm:unbiased}
If the initial configuration of a linear cellular automata 
$\mathsf{CA} = \langle Q, n, \vec{s}, F, \neighbor , M\rangle$ 
is chosen uniformly at random i.e, $\vec{s}\in_R Z^n_{|Q|}$, then for any $0<\tau<|Q|^n$, all the coordinates of 
the subsequent configuration $\vec{t}=M^\tau \vec{s}$ remain uniformly unbiased. 

\end{theorem}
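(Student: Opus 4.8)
The plan is to peel the claim apart coordinate by coordinate and reduce it to a statement about $\Z_{|Q|}$-linear combinations of independent uniform variables. Writing $A=M^{\tau}\in\Z_{|Q|}^{n\times n}$, the $i$-th coordinate of $\vec t=M^{\tau}\vec s$ is $t[i]=\sum_{j=1}^{n}A_{ij}\,s[j]$, a fixed linear combination of the coordinates of $\vec s$, which by the hypothesis $\vec s\in_R\Z_{|Q|}^{n}$ are independent and uniform over $\Z_{|Q|}$. So it suffices to show that every such combination is again uniform over $\Z_{|Q|}$, for which I would isolate the elementary fact: if $u_1,\dots,u_n$ are independent and uniform on $\Z_m$ and $c_1,\dots,c_n\in\Z_m$ have $\gcd(c_1,\dots,c_n,m)=1$, then $\sum_j c_j u_j$ is uniform on $\Z_m$. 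This is proved via the Chinese Remainder Theorem: for each maximal prime power $p^{e}$ dividing $m$, some $c_j$ is a unit modulo $p^{e}$, so conditioning on the remaining summands makes $c_j u_j$ — hence the whole sum — uniform modulo $p^{e}$, and reassembling the residues gives uniformity modulo $m$.

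By the previous paragraph it remains to verify that each row of $A=M^{\tau}$ has entries whose gcd is coprime to $|Q|$, and for this I would invoke the invertibility of $M$ over $\Z_{|Q|}$. Since $\Delta=\{\vec 0\}$ for a linear $\CA$ and the automata at hand is cyclic (a group $\CA$), every configuration lies on a cycle, so $\vec x\mapsto M\vec x$ is a permutation of the finite set $\Z_{|Q|}^{n}$; the inverse of a linear bijection is linear, hence $M\in\mathrm{GL}_n(\Z_{|Q|})$ and therefore $M^{\tau}\in\mathrm{GL}_n(\Z_{|Q|})$ as well. An invertible matrix over $\Z_{|Q|}$ cannot have a row all of whose entries share a common factor $d>1$ with $|Q|$ — otherwise that coordinate of $A\vec x$ would always be divisible by $d$, contradicting surjectivity of $\vec x\mapsto A\vec x$ — so every row of $M^{\tau}$ meets the gcd condition, and combining with the uniformity lemma gives that $t[i]$ is uniform for all $i$ and all $0<\tau<|Q|^{n}$ (indeed for all $\tau\ge0$). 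Equivalently and more directly, $\vec s\mapsto M^{\tau}\vec s$ is a bijection of $\Z_{|Q|}^{n}$, hence pushes the uniform distribution on $\Z_{|Q|}^{n}$ forward to itself, so each marginal $t[i]$ is uniform.

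I expect the invertibility step to be the crux. The statement is genuinely false for degenerate linear rules — a transition matrix with a zero row, or, when $|Q|$ is composite, a row such as $(2,2,0,\dots,0)$ modulo $4$, already destroys uniformity of the corresponding coordinate — so the proof must use that the $\CA$ is a group $\CA$, equivalently that $M\in\mathrm{GL}_n(\Z_{|Q|})$. Everything else is routine Chinese-Remainder bookkeeping. One could alternatively argue by induction on $\tau$ in parallel with the proof of Theorem~\ref{thm:independence}: at the inductive step the coordinates of $M^{\tau}\vec s$ are independent by that theorem, so each coordinate of $M^{\tau+1}\vec s=M\,(M^{\tau}\vec s)$ is a linear combination of independent uniforms and the uniformity lemma applies row by row to $M$ — but the bijection argument above makes Theorem~\ref{thm:independence} unnecessary here and handles all $\tau$ at once.
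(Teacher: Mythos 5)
Your proof is correct, but it takes a genuinely different route from the paper's. The paper argues by induction on $\tau$: it writes $t'[i]=\sum_j m_{ij}t[j]$, asserts that addition and multiplication modulo $|Q|$ are ``unbiased'' because $|Q|^2$ input pairs map onto $|Q|$ outputs, concludes that any fixed $\Z_{|Q|}$-linear combination of independent unbiased coordinates is again unbiased, and invokes Theorem~\ref{thm:independence} to supply the independence needed at each inductive step. You instead work with $A=M^{\tau}$ directly and reduce everything to the observation that $\vec s\mapsto M^{\tau}\vec s$ is a bijection of $\Z_{|Q|}^{n}$, so it pushes the uniform distribution forward to itself and every marginal is uniform; the gcd/CRT lemma gives the same conclusion row by row. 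This buys two things. It removes both the induction and the reliance on Theorem~\ref{thm:independence}, and, more importantly, it makes explicit the hypothesis the paper's argument silently needs: the claim that $\sum_j m_{ij}t[j]$ is automatically unbiased fails when a row of $M$ is zero or, for composite $|Q|$, when all entries of a row share a nontrivial factor with $|Q|$ (your row $(2,2,0,\dots,0)$ over $\Z_4$), since the ``$|Q|^2$ pairs onto $|Q|$ elements'' count does not yield uniformity for multiplication by a zero divisor. Your identification of $M\in\mathrm{GL}_n(\Z_{|Q|})$ --- equivalently the group-$\CA$ property, which holds in the $\LCCA$ setting the paper actually uses later, where $|Q|=p$ is prime and $f_M$ is irreducible --- as the essential ingredient is exactly right. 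The one caveat is that the theorem as literally stated quantifies over all linear $\CA$, for which cyclicity is not part of the definition, so strictly speaking you prove (correctly) a repaired statement rather than the stated one; the paper's own proof does not survive this objection either.
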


\begin{proof}
We proceed by induction on $\tau$. For $\tau=0$, the result follows from the
initialization hypothesis as $\vec{s}\in_R Z^n_{|Q|}$. For the inductive step, suppose that all the coordinates of the 
configuration $\vec{t}=M^\tau \vec{s}$ are unbiased at some time $\tau\ge0$ and $\vec{t}'=M\vec{t}=M^{\tau+1}\vec{s}$. 
Assuming $\vec{v}[i]$ denotes the $i$-th coordinate of the configuration $\vec{v}$, it suffices to show that,
\begin{equation}\label{eq:unbiased}
Pr[t'[i] = a \mid  t[i] = b ] = Pr[t'[j] = a]=\frac{1}{|Q|}.
\end{equation}
for all $i,j$ including $i= j$, and for all $a,b\in \Z_{|Q|}$.

Suppose $M=\{m_{ij}\}$ then $t'[i]=\sum \limits_{j=1}^{n}m_{ij}t[j]$. 
Observe that addition and multiplication over the set $Z_{|Q|}$ are unbiased 
because of the modulo reduction. For each of the operations, there are exactly $|Q|^2$ number of pairs 
which are mapped to $|Q|$ number of elements through modulo reduction. Therefore each of these operations 
is unbiased with probability $\frac{|Q|}{|Q|^2}=\frac{1}{|Q|}$.
Further by the induction hypothesis, 
$t[j]$s are unbiased. As a result $t'[i]=\sum \limits_{j=1}^{n}m_{ij}t[j]$ becomes unbiased. It does not matter 
if the non-zero $m_{ij}$s are biased as they are fixed already.

By Theorem~\ref{thm:independence}, all the coordinates are independent of each other.
Putting these two together, for all $i,j$ including $i= j$, and for all $a,b\in \Z_{|Q|}$
for Eq.~\ref{eq:unbiased} holds true.
\end{proof}

Theorem~\ref{thm:unbiased} implies that there is no
leakage of information from one configuration to the next one. One must evaluate all of the coordinates 
to obtain the next configuration from the present configuration.
Theorem~\ref{thm:independence} and Theorem~\ref{thm:unbiased} together indicate
that this sequence of configurations $\{\vec{s}, M\vec{s}, M^2\vec{s},\ldots, M^{|Q|^n-1}\vec{s}\}$ for a 
linear $\CA$ looks like a pseudo-random sequence of vectors chosen uniformly at random from the vector space $Z^n_{|Q|}$ 
with a periodicity of $(|Q|^n-1)$ as $\Delta=\{\vec{0}\}$.

For linear $\CA$s, it is not necessary to have a common linear form for 
all the local transition functions $f_i:Q^\neighbor \rightarrow Q$. For example, an 
$f_i$ may ignore one of its neighbors $k\in\neighbor$ by making $m_{i,i+k}=0$ while another $f_j$ includes 
the same offset $k\in\neighbor$ keeping $m_{j,j+k}\ne 0$.

\begin{definition}{\normalfont \textbf{(Hybrid Vs. Uniform Cellular Automata).}}
 A linear cellular automata $\mathsf{CA} = \langle Q, n, s, F, \neighbor , M\rangle$
 is a uniform $\CA$ if and only if given a row $i$ of the transition matrix $M$ \emph{any} 
 other row $j>i$ can be obtained by $j-i$ number of right shifts of the $i$-th. A $\CA$ is hybrid if it is not uniform.
\end{definition}

\begin{description}
 \item [The Characteristic Polynomial of $M$]
 The characteristic polynomial $f_M(x)$ of the transition matrix helps us to identify the uniformity of 
a linear $\CA$. If $f_M(x)$ is reducible over the set $Z_{|Q|}$ then the $\CA$ is a uniform one~\cite{Cattell1996Analysis},
else if $f_M(x)$ is irreducible over the set $Z_{|Q|}$ then $M$ generates a hybrid $\CA$. Additionally, if 
$f_M(x)$ is a primitive (also irreducible) polynomial over the set $Z_{|Q|}$ then $M$ generates a maximum-length hybrid $\CA$.
\end{description}

\section{Hard Problems on Cellular Automata}\label{prob}
In this section, we will define two computational problems based on the 
transition of configurations of a particular type of linear cyclic cellular automata.
As $\Z_{|Q|}$ becomes a field when $|Q|=p$ is a prime, we define our final characterization 
of $\CA$s as follows,

\begin{definition}{\normalfont \textbf{(Linear Cyclic Hybrid Cellular Automata over $\F_p$ $\LCCA$).}}
A cellular automata $\LCCA = \langle \F_p, n, \vec{s}, F, \neighbor , M\rangle$ is linear cyclic hybrid cellular automata
over the field $\F_p$, if and only if it is linear, cyclic, hybrid and $Q=\F_p$ for some prime $p$.
\end{definition}

\subsection{Discrete Distance Problem}
We observe that the exponent $\tau$ of $M$ in the expression $\vec{t}=M^\tau\vec{s}$ acts as an 
discrete distance between two vectors $\vec{t}$ and $\vec{s}$.
So we name this problem as \textbf{D}iscrete \textbf{D}istance \textbf{P}roblem.
\begin{definition}{ \normalfont \textbf{(Discrete Distance Problem $\DDP$).}}
    Given an initial configuration $\vec{s}\in\mathbb{F}_p^n$ and a {\it fixed} configuration 
    $\vec{t}\in\mathbb{F}_p^n$ of a linear cyclic hybrid cellular automata over the field $\F_p$
    $\LCCA$, find the discrete distance $\tau$ such that $\vec{t}=M^\tau \vec{s}$. 
\end{definition}

For a maximum-length $\LCCA$, such a $\tau$ is unique modulo $p^n-1$. However, for a group $\LCCA$ 
if $\vec{t}\notin \{M\vec{s}, M^2\vec{s}, \ldots, M^{d-1}\vec{s}\}$ where $M^d=I$ the identity matrix
then $\vec{t}$ is not reachable from $\vec{s}$ at all. In this case, $\tau=\infty$.

\subsection{\textsf{DDP} and \textsf{DLP} are Equivalent}
We start with a different but relevant issue of minimal polynomial $\mu_M(x)$ of the transition matrix $M$ in order to 
show that $M$ is always diagonalizable. Finding $\mu_M(x)$ needs no effort since  
the transition matrix $M$ of any linear finite state machine is non-derogatory~\cite{Cattell1996Analysis}.
For non-derogatory matrices the characteristic polynomial $f_M(x)=\mu_M(x)$.
As a linear $\CA$ is a special case of linear finite state machines the minimal polynomial 
of $\mu_M(x)=f_M(x)$.
However, we can show this explicitly as the following lemma.
\begin{lemma}\label{equal}
 For every $\LCCA$, the minimal polynomial $\mu_M(x)$ and the characteristic polynomial $f_M(x)$ of its 
 transition matrix $M$ are equal. 
\end{lemma}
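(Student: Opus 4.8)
The plan is to show that the minimal polynomial $\mu_M(x)$ of the transition matrix $M$ of an $\LCCA$ has degree $n$; since $\mu_M(x)$ always divides $f_M(x)$ and $\deg f_M(x) = n$, equality follows at once (both being monic). The key structural fact I would exploit is the \emph{hybrid} hypothesis together with the band structure of $M$ coming from the $\neighbor$-neighborhood: because the $\CA$ is hybrid and linear with a local rule, $M$ is (after the bijection $Q \to \Z_p$) a banded matrix whose superdiagonal (or, symmetrically, subdiagonal) entries are all nonzero — this is exactly what prevents $M$ from decomposing the way a uniform $\CA$'s matrix would. A matrix with nonzero entries all along one off-diagonal is non-derogatory; this is the quantitative content behind the citation to Cattell--Muzio, and I would reconstruct it here.

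Concretely, I would argue as follows. First I would fix the three-neighborhood picture $\neighbor = \{-1,0,+1\}$ (the general case is analogous), so that $M$ is tridiagonal with entries $m_{i,i-1}, m_{i,i}, m_{i,i+1}$, and observe that in a genuine hybrid cyclic $\CA$ the off-diagonal couplings cannot vanish (if some $m_{i,i+1} = 0$ the configuration splits and the automaton is not cyclic in the required sense / reduces to a product). Second, I would show that the vectors $e_1, M e_1, M^2 e_1, \ldots, M^{n-1} e_1$ are linearly independent, where $e_1$ is the first standard basis vector: by the tridiagonal shape, $M^k e_1$ has a nonzero entry in coordinate $k+1$ (the product of the nonzero subdiagonal entries $m_{2,1} m_{3,2} \cdots m_{k+1,k}$ carries it there) while all coordinates beyond $k+1$ are zero, so the matrix $[e_1 \mid Me_1 \mid \cdots \mid M^{n-1}e_1]$ is triangular with nonzero diagonal, hence invertible. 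Third, linear independence of this Krylov sequence means no polynomial of degree $< n$ annihilates $e_1$ under $M$, so $\mu_M(x)$ has degree exactly $n$, giving $\mu_M(x) = f_M(x)$.

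The main obstacle I anticipate is justifying, cleanly and in the paper's own terminology, that the relevant off-diagonal entries of $M$ are all nonzero — i.e. that "hybrid cyclic" really does force this. For a general neighborhood $\neighbor$ one has to be careful about which off-diagonal plays the role of the subdiagonal and whether boundary cells (the first and last, whose neighborhoods are truncated) break the triangular argument; I would handle this by picking the extreme off-diagonal corresponding to the largest offset in $\neighbor$ and running the Krylov argument from the appropriate corner, noting that cyclicity rules out the degenerate case where that off-diagonal has a zero. If one is willing to simply invoke the cited non-derogatory property of transition matrices of linear finite-state machines, the lemma is immediate and this obstacle disappears; the value of writing the explicit proof is to make the $\LCCA$ paper self-contained, so I would include the Krylov computation above rather than lean on the citation.
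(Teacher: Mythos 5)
Your route is genuinely different from the paper's, and it has a gap at exactly the point you flagged. The paper never touches the band structure of $M$: it invokes the characterization (stated in its discussion of the characteristic polynomial) that a hybrid $\CA$ has an \emph{irreducible} $f_M(x)$ of degree $n$, and then concludes from $\mu_M(x)\mid f_M(x)$ together with $\deg\mu_M(x)=\deg f_M(x)$ that the two monic polynomials coincide. (Its intermediate factorization of $\mu_M(x)$ into the conjugate factors $x-\alpha^{p^i}$ is really just a restatement of that irreducibility; the essential content is that $\mu_M(x)$ is a nonconstant monic divisor of an irreducible $f_M(x)$, hence equals it.)

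The gap in your version is the claim that ``hybrid cyclic'' forces all entries on one off-diagonal of $M$ to be nonzero. Cyclicity does not give you this: over a finite field every invertible $M$ has finite multiplicative order, so $M^k=I$ for some $k$ and every configuration is periodic even when $M$ is block triangular, i.e.\ even when some $m_{i,i+1}=0$; the state space does not ``split'' in a way that destroys periodicity, and the paper's definition of hybrid (``not every row is a shift of the previous one'') does not exclude zero couplings either --- indeed the paper explicitly allows a local rule to ignore a neighbor by setting $m_{i,i+k}=0$. What a vanishing off-diagonal entry actually destroys is the irreducibility of $f_M(x)$, because $M$ becomes block triangular and $f_M(x)$ factors. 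So the only available way to rule out the degenerate case in your Krylov computation is to appeal to irreducibility of $f_M(x)$ --- and once you have that, the lemma follows in one line from $\mu_M(x)\mid f_M(x)$, making the Krylov sequence unnecessary. Your fallback of citing the non-derogatory property of transition matrices of linear finite-state machines is sound (the paper itself mentions it just before the lemma), but the self-contained tridiagonal argument as written does not close on its own.
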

\begin{proof}
 For every $\LCCA$, the characteristic polynomial is an irreducible polynomial of degree $n$ 
 over the field $\mathbb{F}_p$. Suppose, $\{\alpha,\alpha^p,\ldots,\alpha^{p^{n-1}}\}$ is the normal basis 
 for the extended field $\mathbb{F}_{p^n}$. Therefore, the minimal polynomial can be factorized  
 as $\mu_M(x)=\prod_{i=0}^{n-1}(x-\alpha^{p^i})=(x-\alpha)(x-\alpha^p)\ldots(x-\alpha^{p^{n-1}})$
 where $\alpha$ is a generator of the field $\mathbb{F}_{p^n}$.
 So, the degree of $\mu_M(x)=n$. We know that minimal polynomial divides the characteristic polynomial i.e., 
 $\mu_M(x)\mid f_M(x)$~\cite{Gallian2016Algebra}.
 So, these two following conditions need to be satisfied simultaneously, 
 \begin{enumerate}
  \item $\mathrm{deg}(\mu_M(x))=\mathrm{deg}(f_M(x))$ where deg$(\cdot)$ denotes the degree of a polynomial.
  \item $\mu_M(x)\mid f_M(x)$ where both $\mu_M(x)$ and $f_M(x)$ are monic irreducible monomials.
  \end{enumerate}
Both of these conditions together implies that $\mu_M(x)=f_M(x)$.
\end{proof}

Now we show that $M$ is always diagonalizable.
\begin{theorem}\label{diagonalizable}
 Transition matrix $M$ of any $\LCCA$ is diagonalizable over the field $\mathbb{F}_{p^n}/\langle \mu_M(x)\rangle$ 
 where $\mu_M(x)$ is the minimal polynomial of $M$ over the polynomial ring $\mathbb{F}_p[x]$.
\end{theorem}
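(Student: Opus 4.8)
The plan is to invoke the standard criterion from linear algebra: a matrix over a field $K$ is diagonalizable over $K$ if and only if its minimal polynomial factors over $K$ into a product of \emph{distinct} linear factors, i.e.\ it splits over $K$ and is squarefree there. So it suffices to exhibit $n$ distinct roots of $\mu_M(x)$ inside the field $\mathbb{F}_{p^n}\cong\F_p[x]/\langle \mu_M(x)\rangle$ and to observe that, $\mu_M$ having degree $n$, these roots already force it to split completely.

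First I would use Lemma~\ref{equal}: $\mu_M(x)=f_M(x)$ is irreducible of degree $n$ over $\F_p$. Hence $\F_p[x]/\langle \mu_M(x)\rangle$ is the field obtained by adjoining a root $\alpha$ of $\mu_M$ to $\F_p$, where $\alpha$ is the residue class of $x$. Applying the Frobenius map $y\mapsto y^{p}$ repeatedly to $\mu_M(\alpha)=0$ and using that the coefficients of $\mu_M$ lie in $\F_p$ and are therefore fixed by Frobenius, one gets $\mu_M\!\left(\alpha^{p^{i}}\right)=0$ for every $i$. Thus $\alpha,\alpha^{p},\ldots,\alpha^{p^{n-1}}$ are all roots of $\mu_M$ in $\mathbb{F}_{p^n}$, and since $\deg \mu_M=n$ these exhaust the roots, giving $\mu_M(x)=\prod_{i=0}^{n-1}\bigl(x-\alpha^{p^{i}}\bigr)$ — exactly the normal-basis factorization already recorded inside the proof of Lemma~\ref{equal}.

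Next I would show the $n$ conjugates $\alpha^{p^{i}}$ are pairwise distinct. If $\alpha^{p^{i}}=\alpha^{p^{j}}$ for some $0\le i<j\le n-1$, then applying the (bijective) Frobenius an appropriate number of times yields $\alpha=\alpha^{p^{\,j-i}}$, so $\alpha$ is fixed by the $(j-i)$-fold Frobenius and hence lies in the subfield $\mathbb{F}_{p^{\,j-i}}$ with $j-i<n$; this makes the degree of $\alpha$ over $\F_p$ at most $j-i<n$, contradicting $\deg\mu_M=n$. Therefore the roots are distinct, $\mu_M$ is squarefree over $\mathbb{F}_{p^n}$, and by the criterion above $M$ is similar over $\mathbb{F}_{p^n}$ to $\mathrm{diag}\bigl(\alpha,\alpha^{p},\ldots,\alpha^{p^{n-1}}\bigr)$, the associated eigenvectors forming a basis of $\mathbb{F}_{p^n}^{\,n}$.

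The routine parts — the diagonalizability criterion and the Frobenius bookkeeping — are standard; the only genuine content is the separability of $\mu_M$, that is, the distinctness of its conjugate roots, which I expect to be the main (though short) obstacle and which ultimately rests on $\F_p$ being a perfect field together with the irreducibility of $\mu_M$. I would also take care to note that the criterion is being applied over the not-necessarily-algebraically-closed field $\mathbb{F}_{p^n}$, which is legitimate precisely because $\mu_M$ already splits there and no larger extension is needed.
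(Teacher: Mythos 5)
Your proof follows essentially the same route as the paper's: both reduce the claim to the standard criterion that $M$ is diagonalizable over a field iff $\mu_M$ splits there into distinct linear factors, both use Lemma~\ref{equal} to get $\mu_M=f_M$ irreducible of degree $n$, and both identify the roots as the $n$ Frobenius conjugates $\alpha,\alpha^{p},\ldots,\alpha^{p^{n-1}}$ in $\F_{p^n}\cong\F_p[x]/\langle\mu_M(x)\rangle$. You are in fact somewhat more careful than the paper, which simply asserts that these conjugates are distinct, whereas you prove it by noting that a coincidence $\alpha^{p^i}=\alpha^{p^j}$ would place $\alpha$ in a proper subfield and contradict $\deg\mu_M=n$.
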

\begin{proof}
 For any $\LCCA$, the transition matrix $M$ is in $\F_p^{n \times n}$, so its minimal polynomial $\mu_M(x)$ is 
 defined over the polynomial ring $\mathbb{F}_p[x]$. 
 Using Lemma~\ref{equal}, the characteristic polynomial $f_M(x)=\mu_M(x)=(x-\alpha)(x-\alpha^p)\ldots(x-\alpha^{p^{n-1}})$.
 By Knrocker's Theorem, $\F_{p^n}[x]/\langle f_M(x)\rangle$ is also a field having $f_M(x)=0$. 
 Therefore, $f_M(x)$ has $n$ distinct roots $\{\alpha,\alpha^p,\ldots,\alpha^{p^{n-1}}\}$ over the field 
 $\F_{p^n}/\langle f_M(x) \rangle$. 
 The eigenvalues of a matrix are the roots of its minimal polynomial.
 As $f_M(x)=\mu_M(x)$, $M$ has $n$ distinct eigenvalues $\{\alpha,\alpha^p,\ldots,\alpha^{p^{n-1}}\}$ over the field 
 $\F_{p^n}/\langle \mu_M(x)\rangle$ . 
 
 Since a matrix is diagonalizable over the field $\mathbb{F}_{p^n}$ if and only if the factors of its minimal polynomial 
 has $n$ distinct roots over $\F_{p^n}$, $M$ is diagonalizable over 
 the field $\F_{p^n}/\langle \mu_M(x)\rangle$.
\end{proof}

\subsection{The Reduction DDP$\le_{p}$DLP}
\begin{theorem}\label{thm:forward}
 If there exists an algorithm $\D$ that efficiently solves the problem $\DLP$ over the field $\F_{p^n}$ then 
 there exists an algorithm $\d$ that efficiently solves the problem of $\DDP$.
\end{theorem}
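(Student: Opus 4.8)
The plan is to reduce a $\DDP$ instance to a single call to the $\DLP$-solver $\D$ by \emph{diagonalizing} the transition matrix over $\F_{p^n}$, which Theorem~\ref{diagonalizable} makes available. Given a $\DDP$ instance $(\vec{s},\vec{t})$ of an $\LCCA$ with transition matrix $M\in\F_p^{n\times n}$, the algorithm $\d$ first computes the characteristic polynomial $f_M(x)$; by Lemma~\ref{equal} it equals the minimal polynomial and is irreducible of degree $n$ over $\F_p$, so we may work in the concrete field $\F_{p^n}=\F_p[x]/\langle f_M(x)\rangle$ and set $\alpha=x \bmod f_M(x)$, a root of $f_M$ that is a generator of $\F_{p^n}^*$ when the $\LCCA$ is maximum-length. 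By Theorem~\ref{diagonalizable}, $M$ has the $n$ distinct eigenvalues $\alpha,\alpha^{p},\ldots,\alpha^{p^{n-1}}$ over $\F_{p^n}$; solving $(M-\lambda I)\vec{x}=\vec{0}$ for each such $\lambda$ by Gaussian elimination in the polynomial-basis representation of $\F_{p^n}$ yields eigenvectors that assemble into an invertible $P\in\F_{p^n}^{n\times n}$ with $M=PDP^{-1}$ and $D=\mathrm{diag}(\alpha,\alpha^{p},\ldots,\alpha^{p^{n-1}})$. Every step costs $\mathrm{poly}(n,\log p)$.

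Multiplying $\vec{t}=M^{\tau}\vec{s}$ by $P^{-1}$ turns it into $D^{\tau}\vec{u}=\vec{w}$ with $\vec{u}=P^{-1}\vec{s}$ and $\vec{w}=P^{-1}\vec{t}$, i.e.\ coordinatewise $w_i=(\alpha^{p^{i-1}})^{\tau}u_i$. If $\vec{s}=\vec{0}$ then $\vec{t}=\vec{0}$ and any $\tau$ works; otherwise invertibility of $P^{-1}$ forces some $u_i\ne 0$, so $\beta:=w_i u_i^{-1}=(\alpha^{p^{i-1}})^{\tau}\in\F_{p^n}^*$. Since $\gcd(p^{i-1},p^{n}-1)=1$, the element $\alpha^{p^{i-1}}$ has the same order as $\alpha$, hence is again a generator of $\F_{p^n}^*$ in the maximum-length case; then $\d$ calls $\D$ on $(\alpha^{p^{i-1}},\beta)$, receives $\tau$ modulo $p^{n}-1$, verifies $\vec{t}=M^{\tau}\vec{s}$ by fast exponentiation of $M$, and outputs $\tau$. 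Using a single coordinate with $u_i\ne 0$ already suffices: the relations $w_i=(\alpha^{p^{i-1}})^{\tau}u_i$ are mutually consistent because they are the image under the fixed change of basis $P$ of the one relation $\vec{t}=M^{\tau}\vec{s}$, and $M$ has order equal to that of $\alpha$, so the recovered value is the correct discrete distance.

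The main obstacle I anticipate is the group (non-maximum-length) $\LCCA$ case, where $\alpha$ generates only a proper subgroup $\langle\alpha\rangle\le\F_{p^{n}}^*$ of order $d\mid p^{n}-1$. There $\d$ must first decide reachability by testing whether $\beta\in\langle\alpha\rangle$ (equivalently $\beta^{d}=1$), returning $\tau=\infty$ when the test fails, and, when it succeeds, the discrete logarithm of $\beta$ to base $\alpha^{p^{i-1}}$ \emph{inside} $\langle\alpha\rangle$ has to be extracted from the solver $\D$ for the full group $\F_{p^{n}}^*$ via a standard subgroup-to-group $\DLP$ reduction (two logarithms to a primitive base followed by a linear congruence modulo $d$), which in turn requires the cycle length $d$ — obtainable, at worst, from the factorization of $p^{n}-1$. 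All the remaining work is routine linear algebra over $\F_{p^{n}}$ layered on top of the diagonalization already guaranteed by Theorem~\ref{diagonalizable}, so the reduction runs in time polynomial in $n$, $\log p$, and the running time of $\D$.
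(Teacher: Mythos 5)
Your proposal follows essentially the same route as the paper's own proof: diagonalize $M$ over $\F_{p^n}\cong\F_p[x]/\langle f_M(x)\rangle$ using Theorem~\ref{diagonalizable}, recover $\alpha^{\tau}$ (the paper solves the linear system for all the $\alpha^{\tau p^i}$ by Gaussian elimination, which is the same computation as your cleaner step of applying $P^{-1}$ and taking the coordinate ratio $w_iu_i^{-1}$), and then make a single call to the $\DLP$ solver on the pair $(\alpha,\alpha^{\tau})$. Your treatment is in fact slightly more careful than the paper's, since you also handle $\vec{s}=\vec{0}$ and the non-maximum-length (group) case where $\alpha$ is not primitive and reachability must be tested, points the paper's proof passes over silently.
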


\begin{proof}
$\D$ takes a generator $g\in\mathbb{F}_{p^n}^*$ and an element $a\in\mathbb{F}_{p^n}^*$ as its 
input and outputs an $x$ such that $a=g^x$. 
The inputs of a $\DDP$ instance are
two configurations $\vec{s},\vec{t}\in\mathbb{F}^n_p$.
$\d$ needs to output a $\tau\le  p^n-1$
such that $\vec{t}=M^\tau \vec{s}$. 

Using lemma~\ref{diagonalizable}, $\d$ first diagonalizes $M=Q \Lambda Q^{-1}$ uniquely, such that,
\begin{align*}
M &=
\overbrace{
\begin{pmatrix}
q_{11} & q_{12} & \cdots & q_{1n} \\
q_{21} & q_{22} & \cdots & q_{2n} \\
\vdots & \vdots & \ddots & \vdots \\
q_{n1} & q_{n2} & \cdots & q_{nn} \\
\end{pmatrix}}^{Q}  
\overbrace{
\begin{pmatrix}
\alpha & 0 & \cdots & 0 \\
0 & \alpha^p & \cdots & 0 \\
\vdots & \vdots & \ddots & \vdots   \\
0 & 0 & \cdots & \alpha^{p^{n-1}} \\
\end{pmatrix}}^{\Lambda} 
\overbrace{
\begin{pmatrix}
q'_{11} & q'_{12} & \cdots & q'_{1n} \\
q'_{21} & q'_{22} & \cdots & q'_{2n} \\
\vdots & \vdots & \ddots & \vdots \\
q'_{n1} & q'_{n2} & \cdots & q'_{nn} \\
\end{pmatrix}}^{Q^{-1}}\\
M^\tau &=
\overbrace{
\begin{pmatrix}
q_{11} & q_{12} & \cdots & q_{1n} \\
q_{21} & q_{22} & \cdots & q_{2n} \\
\vdots & \vdots & \ddots & \vdots \\
q_{n1} & q_{n2} & \cdots & q_{nn} \\
\end{pmatrix}}^{Q}  
\overbrace{
\begin{pmatrix}
(\alpha^\tau) & 0 & \cdots & 0 \\
0 & (\alpha^\tau)^{p} & \cdots & 0 \\
\vdots & \vdots & \ddots & \vdots   \\
0 & 0 & \cdots & (\alpha^\tau)^{p^{n-1}} \\
\end{pmatrix}}^{\Lambda^\tau} 
\overbrace{
\begin{pmatrix}
q'_{11} & q'_{12} & \cdots & q'_{1n} \\
q'_{21} & q'_{22} & \cdots & q'_{2n} \\
\vdots & \vdots & \ddots & \vdots \\
q'_{n1} & q'_{n2} & \cdots & q'_{nn} \\
\end{pmatrix}}^{Q^{-1}}
\end{align*}
Therefore, assuming $x[i]$ denotes the $i$-th coordinate of any vector $\vec{x}$.
\begin{align}\label{decomposition}
 \begin{split}
 \vec{t} &=M\vec{s}\\
\overbrace{\begin{pmatrix}
 t[1]\\
 t[2]\\
 \vdots\\
 t[n]
\end{pmatrix}}^{\vec{t}} &=
\overbrace{
\begin{pmatrix}
q_{11} & q_{12} & \cdots & q_{1n} \\
q_{21} & q_{22} & \cdots & q_{2n} \\
\vdots & \vdots & \ddots & \vdots \\
q_{n1} & q_{n2} & \cdots & q_{nn} \\
\end{pmatrix}}^{Q}  
\overbrace{
\begin{pmatrix}
(\alpha^\tau) & 0 & \cdots & 0 \\
0 & (\alpha^\tau)^{p} & \cdots & 0 \\
\vdots & \vdots & \ddots & \vdots   \\
0 & 0 & \cdots & (\alpha^\tau)^{p^{n-1}} \\
\end{pmatrix}}^{\Lambda^\tau} 
\overbrace{
\begin{pmatrix}
q'_{11} & q'_{12} & \cdots & q'_{1n} \\
q'_{21} & q'_{22} & \cdots & q'_{2n} \\
\vdots & \vdots & \ddots & \vdots \\
q'_{n1} & q'_{n2} & \cdots & q'_{nn} \\
\end{pmatrix}}^{Q^{-1}}
\overbrace{\begin{pmatrix}
 s[1]\\
 s[2]\\
 \vdots\\
 s[n]
\end{pmatrix}}^{\vec{s}} \\
\overbrace{\begin{pmatrix}
 t[1]\\
 t[2]\\
 \vdots\\
 t[n]
\end{pmatrix}}^{\vec{t}} 
&=
 \begin{pmatrix}
(q_{11}s[1]q'_{11} +\cdots + q_{1n}s[n]q'_{n1}) & \cdots & (q_{11}s[1]q'_{1n} +\cdots + q_{1n}s[n]q'_{nn}) \\
(q_{21}s[1]q'_{11} +\cdots + q_{2n}s[n]q'_{n1}) & \cdots & (q_{21}s[1]q'_{2n} +\cdots + q_{2n}s[n]q'_{nn}) \\
 \vdots  & \vdots  & \vdots\\
(q_{n1}s[n]q'_{n1} +\cdots + q_{nn}s[n]q'_{n1}) & \cdots & (q_{n1}s[1]q'_{nn} +\cdots + q_{nn}s[n]q'_{nn})  \\
\end{pmatrix}
\begin{pmatrix}
 \alpha^\tau\\
 (\alpha^p)^\tau\\
 \vdots\\
 (\alpha^{p^{n-1}})^\tau
\end{pmatrix}
 \end{split}
\end{align}

Given $\vec{t}$ and $\vec{s}$, $\d$ finds all the the $\alpha^{\tau p^i}$ by Gaussian elimination. 
Then $\d$ passes the pair $(\alpha, \alpha^\tau)$ to $\D$. When $\D$ outputs $\tau$, $\d$ outputs $\tau$.

\end{proof}

\subsection{The Reduction DLP$\le_{p}$DDP}
This reduction is much more straight forward than the previous one. 
The main difference between this reduction with the previous one is that in this case given the generator $g\in\F_{p^n}$ 
for $\DLP$ we need to find an $M=Q \Lambda Q^{-1}$ for $\DDP$ such that the diagonal entries of the diagonal matrix $\Lambda_{ii}$ 
become $g^{p^i}$.  We do it by the following lemma.

\begin{lemma}\label{transformation}
The left multiplication by any element $a\in\mathbb{F}_{p^n}$ translates to 
a linear transformation $T_a$ over the field $\mathbb{F}_{p^n}$.
\end{lemma}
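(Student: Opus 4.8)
The plan is to exhibit $T_a$ explicitly as the ``multiplication-by-$a$'' map on $\F_{p^n}$ regarded as an $n$-dimensional vector space over the base field $\F_p$, and to verify the two defining properties of an $\F_p$-linear operator. Concretely, I would fix once and for all an $\F_p$-basis of $\F_{p^n}$; the normal basis $\{\alpha,\alpha^p,\ldots,\alpha^{p^{n-1}}\}$ already used in Lemma~\ref{equal} and Theorem~\ref{diagonalizable} is the natural choice, since it is the one that will later diagonalize the resulting matrix. For $a\in\F_{p^n}$ define $T_a:\F_{p^n}\to\F_{p^n}$ by $T_a(x)=ax$. Additivity, $T_a(x+y)=a(x+y)=ax+ay=T_a(x)+T_a(y)$, is just the distributive law in the ring $\F_{p^n}$; homogeneity over $\F_p$, $T_a(cx)=a(cx)=c(ax)=cT_a(x)$ for $c\in\F_p$, is associativity and commutativity of multiplication. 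Hence $T_a$ is $\F_p$-linear, which is the assertion.

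Next I would record the matrix form, since that is what the reduction $\DLP\le_{p}\DDP$ actually consumes: relative to the fixed basis, $T_a$ is represented by a matrix $M_a\in\F_p^{n\times n}$ whose $j$-th column lists the coordinates of $a\cdot e_j$. I would then note --- either by the standard fact that the characteristic polynomial of multiplication-by-$a$ on $\F_{p^n}/\F_p$ equals $\prod_{i=0}^{n-1}(x-a^{p^i})$, or directly by invoking Theorem~\ref{diagonalizable} --- that the eigenvalues of $M_a$ over $\F_{p^n}$ are exactly the Frobenius conjugates $\{a,a^p,\ldots,a^{p^{n-1}}\}$. Thus if $a=g$ is a generator of $\F_{p^n}^*$, then $M_g=Q\Lambda Q^{-1}$ with the diagonal entries of $\Lambda$ equal to $g,g^p,\ldots,g^{p^{n-1}}$, which is precisely the shape required to turn a $\DLP$ instance $(g,g^x)$ into a $\DDP$ instance.

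The content of the lemma itself is light --- it is a one-line consequence of the ring axioms --- so the real subtlety is conceptual rather than computational: ``linear transformation'' must be read as $\F_p$-linear, because over $\F_{p^n}$ viewed as a module over itself the statement is vacuous (the space is one-dimensional). The place where genuine work is still needed is downstream: to use $M_a$ as the transition matrix of an $\LCCA$ one must ensure its characteristic polynomial is irreducible of degree $n$, which forces the choice $a=g$ primitive, and one must verify the conjugate-eigenvalue description together with the cyclic and hybrid properties. None of that is required for the linearity claim asserted here, so I expect the proof of Lemma~\ref{transformation} to be short, with the only care going into committing to the base field $\F_p$ and to the normal basis that will make the subsequent diagonalization explicit.
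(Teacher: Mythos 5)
Your proposal is correct and follows essentially the same route as the paper: both realize $T_a$ as the multiplication-by-$a$ map on $\F_{p^n}$ viewed as an $n$-dimensional $\F_p$-vector space, fix a basis, and read off the matrix $T_a\in\F_p^{n\times n}$ whose columns are the coordinates of $a$ times the basis elements (the paper uses the polynomial basis and does the expansion of $ab$ explicitly, and likewise closes by noting that $a$ is an eigenvalue of $T_a$). The only cosmetic difference is your preference for the normal basis and your up-front verification of the linearity axioms, neither of which changes the substance.
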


\begin{proof}
Suppose $\mathbb{B}=\{\beta_0,\beta_1,\beta_2,\ldots,$ $\beta_{n-1}\}$ form an
$\F_p$-basis of $\F_{p^n}$. For example, we can represent 
$\F_{p^n}=\F_p[x]/\langle f(x)\rangle$, and consider the polynomial basis
$1,x,x^2,\ldots,x^{n-1}$. Every $\beta_i\beta_j$ can be written
as an $\F_p$-linear combination of the basis elements.

Fix an element $a=\alpha_0\beta_0+\alpha_1\beta_1+\alpha_2\beta_2+\cdots+\alpha_{n-1}\beta_{n-1}\in\F_{p^n}^*$, 
where, $\alpha_i\in\mathbb{F}_{p^n}$. Now, take an arbitrary element
\begin{equation}
 b=\gamma_0\beta_0 + \gamma_1\beta_1 + \gamma_2\beta_2 + \cdots + \gamma_{n-1}\beta_{n-1} = 
  (\beta_0 \ \beta_1 \ \beta_2 \cdots  \beta_{n-1})
 \begin{pmatrix} \gamma_0\\ \gamma_1\\ \gamma_2\\ \vdots \\ \gamma_{n-1} \end{pmatrix} 
\end{equation}
assuming $\gamma_i\in\F_{p^n}$.
Expand the product $ab$, and replace each $\beta_i\beta_j$
as the (known) linear combination of the basis elements. This lets
us write
\begin{equation}
 ab=(\beta_0 \ \beta_1 \ \beta_2 \cdots  \beta_{n-1}) T_{\alpha} \begin{pmatrix} \gamma_0\\ \gamma_1\\ \gamma_2\\ \vdots \\ \gamma_{n-1} \end{pmatrix},
\end{equation}
where $T_a\in \F_p^{n \times n}$. That is, multiplication by $a$ can be viewed as the linear transformation $T_a$.

Conversely, given a matrix $T_a$ can be 
obtained efficiently by solving the equation $det(xI-T_a)$. In particular, $a$ is one of the 
eigenvalues of $T_a$.
\end{proof}

\begin{theorem}\label{thm:backward}
 If there is an algorithm $\d$ that efficiently solves the problem $\DDP$ then there is an efficient algorithm 
  $\D$ that solves the problem of $\DLP$.
\end{theorem}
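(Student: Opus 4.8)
The plan is to turn any instance of $\DLP$ over $\F_{p^n}$ into an instance of $\DDP$ that $\d$ can solve, and then read the discrete logarithm directly off its answer. Recall $\D$ receives a generator $g\in\F_{p^n}^*$ and an element $a=g^x\in\F_{p^n}^*$ and must output $x$ modulo $p^n-1$. First I would apply Lemma~\ref{transformation} to $g$ to obtain the matrix $M=T_g\in\F_p^{n\times n}$ representing left multiplication by $g$, computed in a fixed $\F_p$-basis $\mathbb{B}=\{\beta_0,\ldots,\beta_{n-1}\}$ of $\F_{p^n}$ (e.g. the polynomial basis of $\F_p[x]/\langle f(x)\rangle$). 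The crucial observation is that $a\mapsto T_a$ is a ring homomorphism, hence multiplicative, so $M^\tau=T_g^\tau=T_{g^\tau}$ for every exponent $\tau$. Consequently, if $\vec{s}$ is the coordinate vector of $1\in\F_{p^n}$ (equivalently of any fixed nonzero $\beta$) and $\vec{t}$ is the coordinate vector of $a$ (resp. of $a\beta$), then $M^x\vec{s}=T_{g^x}\vec{s}=T_a\vec{s}=\vec{t}$. Thus $(M,\vec{s},\vec{t})$ is a $\DDP$ instance whose discrete distance is exactly $x$.

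Next I would verify that $M=T_g$ really is an $\LCCA$, so that $\d$ is obliged to answer correctly and the answer is unique. Since $g$ generates $\F_{p^n}^*$ it is a primitive element, so $\F_p(g)=\F_{p^n}$ and the minimal polynomial $m_g(x)$ of $g$ over $\F_p$ has degree $n$ and is a primitive polynomial (as $g$ has multiplicative order $p^n-1$). The matrix of multiplication by $g$ is non-derogatory, so its characteristic polynomial equals $m_g(x)$, which over $\F_{p^n}$ factors as $\prod_{i=0}^{n-1}(x-g^{p^i})$; hence $M$ has exactly the eigenvalues $g,g^p,\ldots,g^{p^{n-1}}$ demanded by the discussion preceding Lemma~\ref{transformation}. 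Irreducibility of $m_g$ over $\F_p$ makes $M$ hybrid, primitivity makes $M$ maximum-length, hence cyclic, and $M$ is linear by construction with $Q=\F_p$. So $M$ is a maximum-length $\LCCA$, and by the remark following the definition of $\DDP$ the value $\tau$ returned by $\d$ on $(M,\vec{s},\vec{t})$ is unique modulo $p^n-1$; therefore $\tau\equiv x\pmod{p^n-1}$, and $\D$ outputs $\tau$. For efficiency, forming $T_g$ only requires expanding the products $\beta_i\beta_j$ and $g\beta_j$ in $\mathbb{B}$, which is polynomial in $n$ and $\log p$; writing $\vec{s}$ and $\vec{t}$ is immediate; then one call to $\d$ and a copy of its output. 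So $\D$ is polynomial-time whenever $\d$ is.

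The main obstacle I expect is twofold. The first part is the bookkeeping of checking all four defining properties of $\LCCA$ for $M=T_g$ — in particular tying the \emph{maximum-length}/\emph{cyclic} requirement to the primitivity of $m_g$, and confirming that the neighborhood $\neighbor$ (here the full index range) is admissible for the matrix produced by Lemma~\ref{transformation}. The second, more delicate point arises if $\d$ is only guaranteed to work on \emph{randomly chosen} $\DDP$ instances: then I would instead pick a uniformly random nonzero $\beta\in\F_{p^n}$ and set $\vec{s}$ to the coordinates of $\beta$ and $\vec{t}$ to those of $a\beta$, so that $\vec{s}$ is uniform over nonzero configurations and (since multiplication by the fixed nonzero $a$ is a bijection) so is $\vec{t}$ marginally. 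The subtlety is that $\vec{t}$ is a deterministic function of $\vec{s}$, so one must argue that on a maximum-length $\LCCA$ — where every nonzero target is reachable from every nonzero source — this joint distribution is close enough to the one $\d$ expects, or else invoke the random self-reducibility of $\DLP$ to also randomize over the generator $g$.
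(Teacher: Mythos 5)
Your proposal takes essentially the same route as the paper: both convert the $\DLP$ instance $(g,a)$ into the $\DDP$ instance given by $M=T_g$, an arbitrary nonzero $\vec{s}$, and $\vec{t}=T_a\vec{s}$ via Lemma~\ref{transformation}, then read $x$ off the returned $\tau$. Your write-up is in fact more careful than the paper's one-paragraph argument --- it makes explicit the multiplicativity $T_g^\tau=T_{g^\tau}$, verifies that $T_g$ is a maximum-length $\LCCA$ so the answer is unique modulo $p^n-1$, and flags the average-case/random-instance caveat, all of which the paper leaves implicit.
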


\begin{proof}
 Given two arbitrary configurations $\vec{s},\vec{t}\in\mathbb{F}_p^n$ of an $\LCCA$, 
 $\d$ outputs a $\tau$ such that $\vec{t}=M\vec{s}$.
 Given an instance $(g,a)$ of $\DLP$, $\D$ finds the linear transformations $T_g$ and $T_a$ using Lemma~\ref{transformation}.
$\D$ takes an arbitrary nonzero $\vec{s}\in\mathbb{F}_p^n$ and obtains $\vec{t}=T_a \vec{s}$. 
$\D$ feeds $\vec{s}$, $\vec{t}$ and $T_g$ to $\d$ to have the solution $\tau$ as $\vec{t}= T_a \vec{s}=(T_g)^\tau \vec{s}$.
\end{proof}

We did not construct $M=Q \Lambda Q^{-1}$ where $\Lambda_{ii}=g^{p^i}$ because $Q_{ij}=f(g)_{i}^{q^j}$. So the cost of 
reduction would have been more. Similarly, linear transformations were not used for 
the previous reduction as given a $\vec{t}$ and $\vec{s}$, to find the $M^\tau$ we required 
to solve the system of matrix equations $M^i \vec{t} = M^\tau M^i \vec{s}$, but spending more time.  

\subsection{Explaining The Equivalence}\label{relation}

Here we show that $\mathbb{F}_p\langle\alpha \rangle$ and $\mathbb{F}_p\langle M \rangle$ are isomorphic and this is 
the key reason that $\DLP$ and $\DDP$ equivalent.

As already mentioned that the characteristic polynomial $f_M(x)$ of any $\LCCA$ is irreducible.
Since $f_M(x)$ is irreducible polynomial of degree $n$, the polynomial ring $\F_p[x]/\langle f_M(x) \rangle $ 
must be isomorphic to the field $\mathbb{F}_{p^n}$.
Suppose $\alpha$ is one of the roots of $f_M(x)$ over the field $\F_{p^n}$ such that $M=T_\alpha$, then
the field $\F_p\langle\alpha\rangle=\{0,1,\alpha,\ldots,\alpha^{p^d-2}\}$ 
must be a subfield of $\F_p[x]/ \langle f_M(x) \rangle$ where $d \mid n$. When $d=n$ and $\alpha^i\ne 1$ for all $i<p^n-1$ then 
$\alpha$ is called a primitive element of the field $\F_{p^n}$ and $f_M(x)$ is called a primitive polynomial. 
As $f_M(x)$ is the characteristic polynomial of the matrix $M$ by Caley-Hamilton theorem $f(M)=\mathbf{0}$ over 
the field $\F_p$. And using Lemma~\ref{transformation}, we can replace $\alpha$ with the matrix $M=T_\alpha$. 
Therefore, $\mathbb{F}_p\langle\alpha\rangle\simeq\mathbb{F}_p\langle M\rangle=\{\mathbf{0},I,M,\ldots,M^{p^d-2}\}$. 
In fact, using the matrices of $\mathbb{F}_p\langle M\rangle$ and a state $\vec{s}_i \in\mathbb{F}_p^n$ 
we can have a set of disjoint sequences $S_i=\langle \mathbf{0},I.\vec{s}_i,M.\vec{s}_i,\ldots,M^{p^d-2}.\vec{s}_i\rangle$.
We have $\bigcap\limits_{i=1}^{\frac{p^n-1}{p^d-1}} S_{i}=\emptyset$ and 
$\bigcup\limits_{i=1}^{\frac{p^n-1}{p^d-1}} S_{i}=\mathbb{F}_p^n$. Essentially, the sequences $S_i$ are the cycles 
of a linear cyclic hybrid group cellular automata. When $f_M(x)$ is a primitive polynomial, $M$ becomes a primitive element 
of the field $\F_p\langle M\rangle$ yielding a maximum length linear cyclic hybrid cellular automata.

On the other hand, as $f_M(M)=\mathbf{0}$, we define a subfield 
$\F\langle M \rangle=\{\mathbf{0},I,M,\ldots,M^{2^d-2}\}$ 
of $\F_p[x]/ \langle f_M(x) \rangle$.
In particular, when $f_M(x)$ is a primitive polynomial the primitive roots $\alpha$ and $M$ generate the entire 
field $\mathbb{F}_p[x]/ $ as $d=n$. Otherwise, when $d \mid n$, $\F \langle \alpha \rangle$ and 
$\F \langle M \rangle$
are subfields of $\mathbb{F}_p[x]/\langle f_M(x) \rangle$. 
We know that two fields of the same characteristics and same order are 
isomorphic. Hence, $\F \langle \alpha \rangle$ and $\F\langle M \rangle$ are isomorphic.
Any isomorphism between two finite fields is efficient to compute~\cite{Lenstra1991Isomorphism}.
Therefore, if one among the problems $\DDP$ and $\DLP$ was easier than the other one, then 
we could have always turned down the harder problem into the easier one using this isomorphism. So, 
the equivalence between these two could not be proven.

\section{Two Important Variants of DDP}

\begin{definition}{ \normalfont \textbf{( Fixed (coordinates) \textsf{DDP} $\FDP$).}}
For a fixed $k<n$, given a configuration $\vec{s}\in\F_p^n$ of an $n$-cell 
$\LCCA$, and a vector $\vec{x} \in \F_p^k$,
the problem of $\FDP$ is to find a $\tau$ such that 
$M^\tau \vec{s} \in  \{(\vec{x}|\vec{a}) \mid \forall \vec{a} \in \F_p^{n-k}\}$. 
\end{definition}

By the vector $(\vec{x}|\vec{a})$ we mean the augmentation of two vectors $\vec{x}$ and $\vec{a}$.
So $\{(\vec{x}|\vec{a}) \mid \forall \vec{a} \in \F_p^{n-k}\}$ is the set of 
vectors whose $k$ coordinates are fixed to $\vec{x}$. As the coordinates are independent of 
each other so the problem $\FDP$ also remains independent of the choice of the $k$ coordinates and 
the choice for $\vec{x}$ too.

\begin{description}\item [The Reduction $\FDP\le _{p}\DDP$] 
This reduction is straightforward as we are free to choose any 
$\vec{t}\in \{(\vec{x}|\vec{a}) \mid \forall \vec{a} \in \F_p^{n-k}\}$ and call the 
$\DDP$ oracle $\d$ on the inputs $\vec{t}$ and $\vec{s}$. The reply $\tau$ will be a valid solution of 
the $\FDP$ instance as $M^\tau \vec{s}=\vec{t}$.

\item [Does $\DDP\le _{p}\FDP$ ?] Presently, we have no answer for this question 
except the belief that the answer is a no. We reason this "no" with the following  
difference between $\FDP$ and $\DDP$. 
The solution of $\DDP$ is unique modulo $p^n-1$ while there exist $p^{n-k}-1$ solutions for $\FDP$.
This lack of uniqueness of solutions makes $\FDP$ easier than $\DDP$. Thus this reduction should not be possible.
\end{description}

\begin{definition}{ \normalfont  \textbf{( Short (Fixed) \textsf{DDP} $\SDP$).}}
For a fixed $k<n$ and a $\delta\in\Z^+$,  
given a configuration $\vec{s}\in\F_p^n$ of an $n$-cell $\LCCA$, and 
a vector $\vec{x} \in \F_p^k$,
the problem of $\SDP$ is to find a $\tau<\delta$ such that 
$M^\tau \vec{s} \in  \{(\vec{x}|\vec{a}) \mid \forall \vec{a} \in \F_p^{n-k}\}$. 
\end{definition}

By the above argument $\SDP$ also remains independent of the choice for $k$ coordinates and the vector $\vec{x}$.

Due to the restriction $\tau<\delta$, unlike $\DDP$, we have no hint if $\SDP$ reduces to 
$\DDP$, if at all. A famous problem that becomes hard due to the presence of such restriction on 
the norm-bound of its solution is short integer solution problem $\SIS$ ~\cite{Ajtai1996SIS}. 

Our goal is to determine the hardness of the problem $\SDP$. It appears that 
the reduction $\SIS\le _{p} \SDP$ is not immediate. Solutions of $\SIS$ 
are integer vectors while that of $\SDP$ are vectors over $\F_q$. 
Therefore, we define another problem, namely Short Lacunary Solution Problem
($\SRPS$), as the missing link between $\SDP$ and $\SIS$. In particular, 
we will show that, $\SIS \le _{p} \SRPS$ followed by $\SRPS \le _{p} \SDP$. 
As Turing reductions are transitive, we will conclude that $\SIS \le _{p} \SDP$.

%

\bibliographystyle{splncs04}

\bibliography{ref}
\end{document}